\def\vmin{v_{\min}}
\def\vmax{v_{\max}}
\def\bv{{\bf v}}
\newtheorem{theorem}{Theorem}[section]
\renewcommand{\Omega}{\mathrm{D}}
\begin{document}

\title{On the Throughput-Delay Trade-off in Georouting Networks}

\author{
Philippe Jacquet\footnote{INRIA, Rocquencourt, France. Email: \texttt{name.surname@inria.fr}},
Salman Malik*,
Bernard Mans\footnote{Macquarie University, Sidney, Australia. Email: \texttt{bernard.mans@mq.edu.au}},
Alonso Silva*,
}
\date{}

\maketitle

\begin{abstract}

We study the scaling properties of a georouting scheme in a wireless multi-hop network of $n$ mobile nodes. Our aim is to increase the network capacity quasi linearly with $n$ while keeping the average delay bounded. In our model, mobile nodes move according to an i.i.d. random walk with velocity $v$ and transmit packets to randomly chosen destinations. The average packet delivery delay of our scheme is of order $1/v$ and it achieves the network capacity of order $\frac{n}{\log n\log\log n}$. This shows a practical throughput-delay trade-off, in particular when compared with the seminal result of Gupta and Kumar which shows  network capacity of order $\sqrt{n/\log n}$ and negligible delay and the groundbreaking result of Grossglausser and Tse which achieves network capacity of order $n$ but with an average delay of order $\sqrt{n}/v$. We confirm the generality of our analytical results using simulations under various interference models.

\end{abstract}

\section{Introduction}

Gupta and Kumar~\cite{gk2000} studied the capacity of wireless networks consisting of randomly located nodes which are immobile. They showed that if each source node has a randomly chosen destination node, the useful network capacity is of order $C\sqrt{n/\log n}$ where $n$ is the number of nodes and $C$ is the nominal capacity of each node. However, if the nodes are mobile and follow i.i.d. ergodic motions in a square area, Grossglauser and Tse~\cite{gt2002} showed that the network capacity can rise to $O(n C)$\footnote{We recall the following notation: (i) $f(n) = O(g(n))$ means that there exists a constant $c$ and an integer $N$ such that $f (n) \le cg(n)$ for $n >N$. (ii) $f (n) = \Theta(g(n))$ means that there exists two constants $c_1$ and $c_2$ and an integer $N$ such that $c_1g(n)\le f (n) \le c_2g(n)$ for $n >N$.} by using the mobility of the nodes. Note that in this case, a source node relays its packet to a random mobile relay node which transmits this packet to its destination node only when they come close together, {\it i.e.}, at a distance of order $1/\sqrt{n}$. Therefore, the time it takes to deliver a packet to its destination would be of order $\sqrt{n} L/v$ where $v$ is the average speed of the nodes and $L$ is the length of the fixed square area where nodes are deployed. In contrast, in Gupta and Kumar's result~\cite{gk2000}, the packet delivery delay tends to be negligible, although the network capacity drops by a factor of $\sqrt{n\log n}$. 

In this article, we aim to maximize the capacity of mobile networks while keeping the mean packet delivery delay bounded with increasing number of nodes. For relaying packets towards their destinations, mobile nodes use our proposed georouting strategy, called the {\em Constrained Relative Bearing} (CRB) scheme. We show that, in a random walk mobility model, this strategy achieves a network capacity of order $\frac{n}{\log n\log\log n}C$ with a time to delivery of order $L/v$. Our main contribution is summarized in Table~\ref{fig:comparison}. Note that in random walk mobility models, nodes have free space motion and move in straight lines with constant speed. This mobility model is a subclass of the free space motion mobility model. Therefore, we can also extend our result to mobility models where the average free space distance $\ell$ is non zero.  

{\small 
\begin{table}
\begin{center}
\begin{tabular}{|c|c|c|}
\hline
     & {\em Network Capacity}      & {\em Delivery Delay}\\
   \hline

    & & \\
Gupta \& Kumar       & $O\left(\sqrt{\frac{n}{\log n}}\right)$   &    negligible  \\
~\cite{gk2000}    & & \\
    \hline
    & & \\
Grossglauser \& Tse      &  $O(n)$  &   $O\left(\frac{\sqrt{n}}{v}\right)$ \\
~\cite{gt2002}    & & \\
       \hline
    & & \\
{\sl Our work}        & $O\left(\frac{n}{\log n\log\log n}\right)$  &   $O\left(\frac{1}{v}\right)$ \\
    & & \\
       \hline
\end{tabular}
\end{center}
\caption{\footnotesize Network Capacity vs. Delivery Delay Trade-off.}
\label{fig:comparison}
\end{table}
}

Consider an example of an urban area network in a fixed square area of length $L$ with number of nodes \mbox{$n=10^6$}, nominal bandwidth \mbox{$C=100$} kbps and delay per store-and-forward operation of $1$ ms. The average packet delivery delay for Gupta and Kumar's case would be around one second but with a network capacity of $10$ Mbps. In the case of Grossglauser and Tse, the network capacity would increase to about $100$ Gbps but if the straight line crossing time $L/v$ is about one hour ({\it e.g.}, with cars as mobile nodes), the time to delivery would be around one month. However, our model of using mobility of nodes along with the proposed CRB scheme, to relay packets to their destinations, would lead to a network capacity of $10$ Gbps with time to delivery of about one hour. 

This article is organized as follows. We first summarize some important related works and results in Section \ref{sec:rworks}. We discuss the models of our network and CRB scheme in Sections~\ref{sec:model} and \ref{sec:protocol} respectively. The analysis of capacity and delay can be found in Section \ref{sec:analysis} and we confirm this analysis using simulations in Section \ref{sec:simulations}. We also discuss a few  extensions of our work in Section \ref{sec:extension} and concluding remarks can be found in Section \ref{sec:conclusions}.

\section{Related Works}
\label{sec:rworks}

The main difference between the proposed models in the works of Gupta and Kumar~\cite{gk2000} and Grossglauser and Tse~\cite{gt2002} is that in the former case, nodes are static and packets are transmitted between nodes like ``hot potatoes'', while in the latter case, nodes are mobile and relays are allowed to carry buffered packets while they move. Both strategies are based on the following model: if $p_n$ is the transmission rate of each node, {\it i.e.}, the proportion of time each node is active and transmitting, the radius of efficient transmission is given by $r_n\sim L\sqrt{\frac{\kappa}{np_n}}$ when $n$ approaches infinity for some constant $\kappa>0$ which depends on the protocol, interference model, etc. 

In the context of~\cite{gk2000}, the number of relays a packet has to traverse to reach its destination is \mbox{$h_n = O(1/r_n)$}. Consequently, $np_nC$ must be divided by $h_n$ to get the useful capacity: \mbox{$np_nC/h_n=O(C\sqrt{p_n n})$}. In order to ensure connectivity in the network, so that every source is able to communicate with its randomly chosen destination, $p_n$ must satisfy the limit \mbox{$p_n\le O(1/\log n)$}. This leads to Gupta and Kumar's maximum capacity of $O(C\sqrt{n/\log n})$ with ``hot potatoes'' routing. 

In contrast, in the context of~\cite{gt2002}, the network does not need to be connected since the packets are mostly carried in the buffer of a mobile relay. Therefore there is no limit on $p_n$ other than the requirement that it must be smaller than some $\varepsilon<1$ that depends on the protocol and some other physical parameters. Thus $r_n$ is $O(1/\sqrt{n})$. In Grossglauser and Tse's model, the source transmits the packet to the closest mobile relay or keeps it until it finds one. This mobile relay delivers the packet to the destination when it comes within range of the destination node. Such a packet delivery requires a transmission phase which also includes retries and acknowledgements so that the packet delivery can be eventually guaranteed. 

The proposed model of~\cite{gt2002} requires a GPS-like positioning system and the knowledge of the effective range $r_n$. The estimate of $r_n$ could be achieved via a periodic beaconing from every node, where each beacon contains the position coordinates of the node, so that a node knows the typical distance for a successful reception. However, the relay cannot rely on beaconing in order to detect when it is in the reception range of the destination. The reason is that a node stays in the reception range of another node for a short time period of order \mbox{$r_n/v=1/\sqrt{n}$} and this cannot be detected via a periodic beaconing with bounded frequency since $p_n=O(1)$ (the frequency of periodic beaconing should be of $O(\sqrt{n})$). We may also assume that the destination node is fixed and its cartesian coordinates are known by the mobile relay. Otherwise, if the destination node is mobile, there would be a requirement for this node to track its new coordinates and disseminate this information in the wireless network as in \cite{homeagent,Li:2000}. 

It is also interesting to note that Diggavi, Grossglauser, and Tse~\cite{Diggavi05} showed that a constant throughput per source-destination pair is feasible even with a more restricted mobility model. Franceschetti et al.~\cite{Franceschetti07} proved that there is no gap between the capacity of randomly located and arbitrarily located nodes. Throughput and delay trade-offs have appeared in~\cite{ElGamal04,Sharma07} where delay of multi-hop routing is reduced by increasing the coverage radius of each transmission, at the expense of reducing the number of simultaneous transmissions the network can support.  We will show that, in our work, we have a delay of $O(1/v)$ and throughput per source-destination pair of $O(\frac{1}{\log n\log\log n})$. If we take the notation of $\sqrt{a(n)}$ in~\cite{ElGamal04,Sharma07} to measure the average distance traveled toward the destination between two consecutive emissions of the same packet, then we will show that our scheme yields $\sqrt{a(n)}=\Theta(1/\log n)$. If we compare with the result of~\cite{ElGamal04,Sharma07}, we should have a throughput of $\Theta(\frac{1}{\log n\sqrt{n\log n}})$ but our scheme delivers a higher throughput by a factor greater than $\sqrt{n}$. In fact, if $\ell$ is the average free space distance of the random walk, then our scheme yields $\sqrt{a(n)}=\Theta\left(\frac{1}{\frac{1}{\ell}+\log n}\right)$. The apparent contradiction comes from the fact that the authors in~\cite{ElGamal04,Sharma07} consider a mobility model based on brownian motion. This corresponds to having $\ell=0$ and, in this case, our scheme would be equivalent to the ``hot potatoes'' routing of \cite{gk2000} with \mbox{$\sqrt{a(n)}=\Theta(r_n)$}. Let us point out that the brownian motion mobility is an important yet worst case model and it is not realistic for real world situations such as urban area mobile networks. In the section devoted to generalizations, we extend our result to fit a more general mobility model where mobile nodes follow fractal trajectories with $\ell=\ell_n=\Theta(1/\log n)$ and the throughput of our scheme remains of $\Theta(\frac{1}{\log n\log\log n})$.

On the practical side, many protocols have been proposed for wireless multi-hop networks. These protocols may be classified in topology-based and position-based protocols. Topology-based protocols~\cite{OLSR,ZRP,AODV} need to maintain information on routes potentially or currently in use, so they do not work effectively in environments with high frequency of topology changes. For this reason, there has been an increasing interest in position-based routing protocols. In these protocols, a node needs to know its own position, the one-hop neighbors' positions, and the destination node's position. These protocols do not need control packets to maintain link states or to update routing tables. Examples of such protocols can be found in \cite{Navas97,Karp00,Ko00,Basagni98,1096061,1096436,Kranakis99compassrouting,1096124}. In contrast to our work, they do not analyze the trade-off between the capacity and the delay of the network under these protocols and their scaling properties. 

\section{Network and Mobility Settings}
\label{sec:model}

We consider a network of $n$ mobile nodes with their initial positions uniformly distributed over the network area. Each mobile node transmits packets to a randomly chosen fixed node, called its destination node, which is also randomly located in the network area. We assume that mobile nodes are aware of their own cartesian coordinates, {\it e.g.}, by using GPS or from the initial position, a mobile node could use the knowledge of its motion vector to compute its cartesian coordinates at any given time.  

Initially we consider that only mobile nodes participate in the relay process to deliver packet to its destination node. The case where the fixed nodes may also participate in the relay process is discussed in Section \ref{sec:extension}. A mobile node should be aware of the cartesian coordinates of the destination node of a packet it carries. Indeed it can be assumed that this information is included in all packets or is relayed with the packets. Hence our model only requires that a source or relay node is aware of the cartesian coordinates of the destination node which is assumed fixed. Note that if the destination node is mobile, a mechanism to disseminate its updated cartesian coordinates in the network can be used, {\it e.g.}, \cite{homeagent,Li:2000}. However, this is outside the scope of this paper as we particularly focus on the throughput-delay tradeoff.

 With the available information, a mobile relay can determine:
\begin{compactitem}[-]
\item its heading vector, which is the motion vector when its speed is non zero,
\item its bearing vector, which is the vector between its position and the position of a packet's destination; and,
\item the relative bearing angle, {\it i.e.}, the absolute angle between its heading and bearing vectors. 
\end{compactitem}
In the example of Fig.~\ref{fig:protocol}, node $A$ is carrying a packet for node $D$. This figure also shows the heading vector of mobile node $A$ and its bearing vector and relative bearing angle for destination node $D$. Note that a mobile relay may carry packets for multiple destinations but can easily determine the bearing vector and relative bearing angle for each destination node. 

\section{Model of CRB Scheme}
\label{sec:protocol}

In this section, we will present the parameters and specifications of the model of our georouting scheme.

\subsection{Parameters}

We define the parameters $\theta_c$, called the carry angle, and $\theta_e$, called the emission angle. Each mobile node carries a packet to its destination node as long as its relative bearing angle, $\theta$, is smaller than $\theta_c$ which is strictly smaller than $\pi/2$. When this condition is not satisfied, the packet is transmitted to the next relay. 

\subsection{Model Specification With Radio Range Awareness}
\label{sec:specdiskgraph}

In the following description, we initially assume that each node is aware of the effective range of transmission $r_n$. This means that there is a periodic beaconing that allows this estimate to be made. In Section \ref{subsec:withoutawareness}, we will investigate how to specify our model without an estimate of the effective range $r_{n}$. 

Assume that node $A$ is carrying a packet for node $D$. The velocity of node $A$ is denoted by $\bv(A)$. 
\begin{compactitem}[-]
\item If node $A$ is within range of node $D$, it transmits the packet to $D$; otherwise,
\item if the relative bearing angle is smaller than $\theta_c$, node $A$ continues to carry the packet; otherwise, 
\item node $A$ transmits the packet to a random neighbor mobile node inside the cone of angle $\theta_e$, with bearing vector as the axis, and then forgets the packet.
\end{compactitem}

\begin{figure} [!t]
\centering
\psfrag{a}{$\theta$ (relative bearing angle)}
\psfrag{b}{$\theta_e$}
\psfrag{c}{Mobile node `A'}
\psfrag{d}{Destination node `D'}
\psfrag{e}{Bearing vector}
\psfrag{f}{Heading vector}
\includegraphics[scale=0.55]{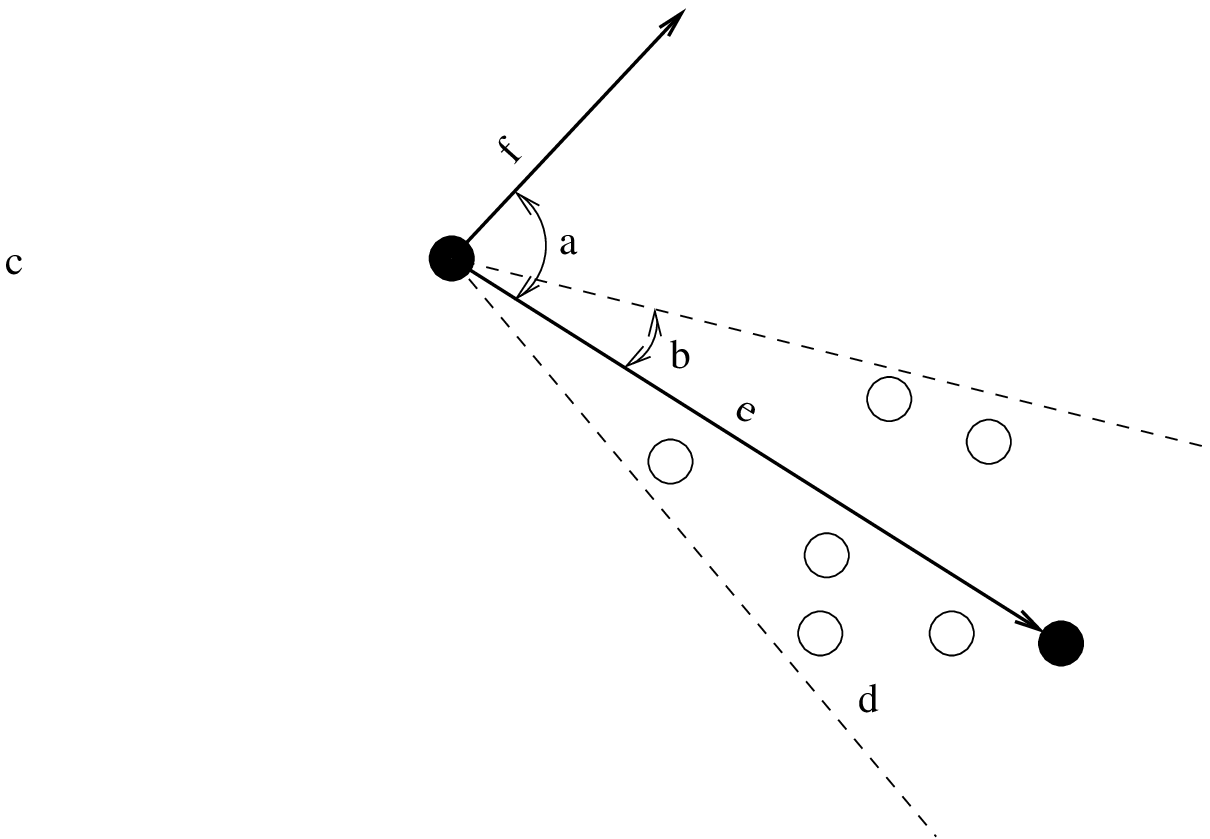}
\caption{Figurative representation of our model. Unfilled circles represent the potential mobile relays for packet transmitted by node $A$ for node $D$.
\label{fig:protocol}}
\end{figure}

In order to better understand the model of our georouting scheme, consider the example in Fig. \ref{fig:protocol}. Assume that node $A$ is out of range of node $D$ and, because of that, it cannot deliver the packet directly. Now, if $\theta<\theta_c$, node $A$ will continue to carry the packet for node $D$. Otherwise, it transmits the packet to one of the random mobile relays, represented by unfilled circles in the figure. 

  

\subsubsection{Transmission procedure}
\label{sec:tx_procedurediskgraph}

To transmit the packet towards another mobile node, node $A$ shall proceed as follows:
\begin{compactitem}[-]
\item it first transmits a {\em Call-to-Receive} packet containing the positions of nodes $A$ and $D$;
\item a random mobile node $B$  which receives this {\em Call-to-Receive} packet can compute the angle $(AB,AD)$. If this angle is smaller than $\theta_e$, it replies with an {\em Accept-to-Receive} packet containing an identifier of node $B$;
\item node $A$ sends the packet to the first mobile node which replied with an {\em Accept-to-Receive} packet.
\end{compactitem}
The first node which sends its {\em Accept-to-Receive} packet notifies the other receivers of the {\em Call-to-Receive} packet, to cancel their transmissions of {\em Accept-to-Receive} packets. There may be more than one (but finite) transmissions of {\em Accept-to-Receive} packets in case two or more receivers are at distance greater than $r_{n}$ from each other.

Note that this procedure does not need any beaconing or periodic transmission of hello packets. The back-off time of nodes, transmitting their {\em Accept-to-Receive} packet, can also be tuned in order to favor the distance or displacement towards $D$, depending on any additional optional specifications.

\subsection{Model Specification Without Radio Range Awareness}
\label{subsec:withoutawareness}

The estimation of $r_{n}$ would require that the nodes employ a periodic beaconing mechanism. If such a mechanism is not available or desirable, the CRB scheme relies on the signal to interference plus noise ratio (SINR) for transmitting packets to their destinations or random mobile relays. In other words, a mobile node can relay a packet to its destination node or another mobile node only if the SINR at the receiver is above a given threshold. 

Note that in this case, the specification of the transmission procedure is also modified so that it terminates when the final destination receives the packet. To transmit the packet towards its destination node or another mobile node, node $A$ shall proceed as follows: 
\begin{compactitem}[-]
\item it first transmits a {\em Call-to-Receive} packet containing the positions of nodes $A$ and $D$;
\item if node $D$ receives this packet, it responds immediately with an {\em Accept-to-Receive} packet with highest priority. Node $A$, on receiving this packet, relays the packet to node $D$; otherwise, 
\item the procedure of selecting a random mobile node, as the next relay, is similar to the procedure described in Section \ref{sec:tx_procedurediskgraph}. A random mobile node $B$, which receives the {\em Call-to-Receive} packet, computes the angle $(AB,AD)$. If this angle is smaller than $\theta_e$, it responds with an {\em Accept-to-Receive} packet; 
\item node $A$ relays the packet to the first mobile node which sent its {\em Accept-to-Receive} packet successfully. The first node which transmits its {\em Accept-to-Receive} packet also makes the other receivers to cancel their transmissions of {\em Accept-to-Receive} packets. 
\end{compactitem}

\section{Performance analysis}
\label{sec:analysis}

We will show that our georouting scheme is stable as long as the average transmission rate of each mobile node is \mbox{$p_n=O(1/\log\log n)$}. We will also show that the number of transmissions per packet is of $O(\log n)$ and this would lead to a useful network capacity of $O(C\frac{n}{\log n\log\log n})$.

We assume that the network area is a square area and without loss of generality we assume that it is a square {\sl unit} area. The mobile nodes move according to i.i.d. random walk: from a uniformly distributed initial position, the nodes move in a straight line with a certain speed and randomly change direction. The speed is randomly distributed in an interval $[\vmin,\vmax]$ with \mbox{$\vmin>0$}. To simplify the analysis, we assume that \mbox{$\vmin=\vmax=v$}. We also assume that each node changes its direction with a Poisson point process of rate $\tau$. When a mobile node hits the border of the network, it simply bounces like a billiard ball. This leads to the {\it isotropic property}~(Jacquet et al.~\cite{jmr2010}): at any given time the distribution of mobile nodes is uniform on the square and the speed are uniformly distributed in direction independently of the position in the square.

We assume that the radius $r_n$ of efficient transmission is given by 
$$
r_n=\sqrt{\beta\frac{\log\log n}{\pi n}}~,
$$ 
for some $\beta>0$. Therefore, the average number of neighbors of an arbitrary node at an arbitrary time is $\beta\log\log n$.
In order to keep the average cumulated load finite, the nodes have an average transmission rate of \mbox{$p_n=\frac{1}{\beta\log\log n}$}. 
Therefore, the actual density of simultaneous transmitters is $\frac{n}{\beta\log\log n}$.

\subsection{Methodology}
The parameters of interest are the following:
\begin{compactitem}[-]
\item The delay $D_n(r)$ of delivering a packet to the destination when the packet is generated in a mobile node at distance $r$ from its destination node.
\item The average number of times $F_n(r)$ the packet changes relay before reaching its destination when it has been generated in a mobile node at distance $r$ from its destination node.
\end{compactitem}
In order to exhibit the actual performance of our proposed CRB scheme, we aim to derive an upper-bound on the parameters $D_n(r)$ and $F_n(r)$. In the next two sub-sections, we assume w.l.o.g. that there is always a relay node, to receive the packet, in the emission cone (as the node density and angle, $\theta_e$, are sufficiently large) when a relay change must occur.

\subsection{Delivery Delay}
In the quantity $D_n(r)$, we ignore the queueing delay which can become apparent when several packets could be in competition in the same relay to be transmitted at the same time. We analyze the delay under the hypothesis that store and forward delays are negligible (these delays would be negligible as long as the queue length is bounded).
\begin{theorem}
We have the bound 
\begin{equation}
D_n(r)\le\frac{r}{v\cos(\theta_c)}~.
\label{eq:delay}
\end{equation}
\end{theorem}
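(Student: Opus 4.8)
The key physical insight is that the CRB scheme guarantees monotone progress toward the destination, and I would make this the backbone of the proof. The plan is to track the distance $\rho(t)$ between the packet's current carrier and the destination node $D$ as a function of time, and to show that this distance decreases at a guaranteed minimum rate, yielding the bound by integrating over the total distance $r$.

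First I would set up the continuous-time picture. At any instant, the packet is held by some mobile carrier moving with speed $v$. By the scheme's rule, a node carries the packet precisely when its relative bearing angle $\theta$ — the angle between its heading vector and its bearing vector toward $D$ — satisfies $\theta<\theta_c$. While carrying, the projection of the node's velocity onto the bearing direction (i.e.\ straight toward $D$) is exactly $v\cos(\theta)$. Since $\theta<\theta_c<\pi/2$ during any carry phase, this projection is bounded below by $v\cos(\theta_c)>0$. Hence, during every carry phase, the distance to $D$ shrinks at rate at least $v\cos(\theta_c)$, so that $-\frac{\d\rho}{\d t}\ge v\cos(\theta_c)$.

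Next I would handle the relay-handoff events. A relay change occurs whenever the carrier's bearing angle reaches $\theta_c$, at which point the packet is transmitted to a new relay inside the emission cone of angle $\theta_e$ around the bearing vector. Under the standing assumption of the subsection — that a suitable relay always exists in the emission cone when a handoff is required — the new carrier is at essentially the same location (transmission range is negligible at this scale) and, lying inside the emission cone, immediately begins making progress toward $D$ again with bearing angle below $\theta_c$. The crucial point is that handoffs are instantaneous and do not increase $\rho$, so the lower bound on the rate of decrease persists across the entire trajectory of the packet. Integrating the differential inequality from the initial distance $r$ down to $0$ gives total delivery time at most $\frac{r}{v\cos(\theta_c)}$, which is exactly \eqref{eq:delay}.

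The main obstacle I anticipate is justifying rigorously that the worst case is a bearing angle held at $\theta_c$, and that handoffs genuinely preserve the progress bound rather than allowing momentary backward drift. One must argue that the new relay selected in the cone of half-angle $\theta_e$ still has a relative bearing angle small enough to guarantee forward motion — or, more carefully, that even if a freshly chosen relay momentarily has an unfavorable heading, the scheme's carry condition ensures it is only retained while $\theta<\theta_c$, so the instantaneous rate bound $v\cos(\theta_c)$ is never violated during any carry interval. Making this handoff argument airtight, together with the implicit assumption that the discrete hop granularity and finite transmission range contribute negligibly to the distance budget, is where the real care is needed; the kinematic integration itself is routine once the rate bound is established.
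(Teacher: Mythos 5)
Your proposal is correct and follows essentially the same argument as the paper's proof: the packet always advances toward the destination at rate at least $v\cos(\theta_c)$ while carried (since the carry condition enforces $\theta<\theta_c<\pi/2$), and relay changes take zero time and do not increase the distance, so integrating gives the bound. The paper states this in two sentences; your write-up merely makes the kinematic integration and the handoff bookkeeping explicit.
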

\begin{proof}
During a relay change, the new relay is closer to the destination than the previous relay. Ignoring relay changes that take zero time, and neglecting the distance decrement during relay change, the packet moves at constant speed $v$ with a relative bearing angle always smaller than $\theta_c$.
\end{proof}

\subsection{Number of Relay Changes}
There are two events that trigger relay changes. 
\begin{compactenum}
\item Relay change due to turn, {\it i.e.}, the mobile node, carrying the packet, changes its heading vector such that the relative bearing angle becomes greater than $\theta_c$.
\item Relay change due to pass over, {\it i.e.}, the mobile node keeps its trajectory and the relative bearing angle becomes greater than $\theta_c$. 
\end{compactenum}

Consider a packet generated at distance $r$ from its destination. Let $F_n^t(r)$ be the average number of relay changes due to turn. Equivalently, let $F_n^p(r)$ be the average number of relay changes due to pass over. Therefore, we have $F_n(r)=F_n^t(r)+F_n^p(r)$ and we expect that the main contribution of $O(\log n)$ in $F_n(r)$ will come from $F_n^p(r)$. 

\subsubsection{Number of Relay Changes Due to Turn}
We prove the following theorem:
\begin{theorem}
We have the bound
\begin{equation*}
F_n^t(r)\le\frac{\pi-\theta_c}{\theta_c}\frac{\tau}{v\cos(\theta_c)}r~.
\end{equation*}
\end{theorem}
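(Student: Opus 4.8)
The plan is to realize $F_n^t(r)$ as the expected number of points of a suitably thinned Poisson process accumulated over the (pathwise bounded) transit time, and then feed in the delay bound from the previous theorem. First I would fix a packet, let $T$ be its total transit time, and note that since relay changes are instantaneous, $T$ equals the total time during which the packet is being carried. The argument behind the delay bound shows more than its statement: pathwise, while a packet is carried the relative bearing angle stays below $\theta_c$, so the packet advances toward its destination at instantaneous rate at least $v\cos(\theta_c)$. Hence $T\le \frac{r}{v\cos(\theta_c)}$ holds deterministically, not merely in expectation.

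Next I would pin down exactly which events are relay changes due to turn. At every instant a single relay carries the packet, and that relay reorients according to a Poisson process of rate $\tau$; tracking the unique current carrier yields a rate-$\tau$ Poisson process of ``turns'' experienced by the packet. Such a turn triggers a relay change precisely when the new relative bearing angle exceeds $\theta_c$. Here the isotropic property (Jacquet et al.~\cite{jmr2010}) does the essential work: at each turn the new heading is uniform over directions and independent of the carrier's position, so the new relative bearing angle is uniform on $[0,\pi]$, and a given turn is relay-changing, independently of the past, with probability $q=\frac{\pi-\theta_c}{\pi}$. By thinning, the relay changes due to turn therefore form a Poisson process $N_t$ of intensity $\lambda=\tau q$.

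Finally I would combine the two ingredients through optional stopping. Since $N_t-\lambda t$ is a martingale and $T$ is a stopping time bounded pathwise by $\frac{r}{v\cos(\theta_c)}$, we get $F_n^t(r)=\EE[N_T]=\lambda\,\EE[T]\le \tau\,\frac{\pi-\theta_c}{\pi}\cdot\frac{r}{v\cos(\theta_c)}$. Because $\theta_c\le\pi$, replacing $\pi$ by $\theta_c$ in the denominator only weakens the estimate, giving the claimed $F_n^t(r)\le \frac{\pi-\theta_c}{\theta_c}\,\frac{\tau}{v\cos(\theta_c)}\,r$; in fact this route produces the slightly sharper constant $\frac{\pi-\theta_c}{\pi}$.

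The main obstacle is the interaction between the randomness of $T$ and the identity of the relay-changing turns: the transit time depends on the entire trajectory, hence on which turns happen to be relay-changing, so one cannot naively multiply an ``expected number of turns'' by $q$. The clean way around this is precisely the martingale/optional-stopping step, which is legitimate here only because $T$ is bounded \emph{pathwise} by the deterministic delay bound rather than just in expectation. The second point requiring care is the thinning probability $q$: it must be computed from the stationary, position-independent direction law guaranteed by the isotropic property, so that each turn is relay-changing independently of the packet's history and the thinned process is genuinely Poisson.
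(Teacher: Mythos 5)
There is a genuine gap, and it sits exactly where you claim an improvement. Your identification $F_n^t(r)=\EE[N_T]$, with $N_t$ the thinned process of ``relay-changing turns,'' silently assumes that each bad turn produces exactly \emph{one} relay change. In the CRB model this is false: when the carrier hands off the packet, the new relay is selected by its \emph{position} (it lies in the emission cone around the bearing vector), not by its heading. By the same isotropic property you invoke, the new relay's heading is uniform and independent of its position, so with probability $\frac{\pi-\theta_c}{\pi}$ its own relative bearing angle already exceeds $\theta_c$ and it must instantaneously relay the packet again. Thus each triggering turn sets off a cascade of relay changes whose length is geometric with success probability $\frac{\theta_c}{\pi}$, hence with mean $\frac{\pi}{\theta_c}$, and $F_n^t(r)$ counts \emph{all} of these (otherwise they would be counted nowhere, and $F_n(r)=F_n^t(r)+F_n^p(r)$ would not bound the total number of transmissions, which is what the capacity argument needs). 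This is precisely the paper's accounting: ``at each direction change, there is an average of $\frac{\pi-\theta_c}{\theta_c}$ relays,'' i.e.\ $\frac{\pi-\theta_c}{\pi}\cdot\frac{\pi}{\theta_c}$. Your claimed sharper constant $\frac{\pi-\theta_c}{\pi}$ is an artifact of undercounting, not a real improvement; the theorem's constant $\frac{\pi-\theta_c}{\theta_c}$ is the correct one for the quantity being bounded.

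The fix is local and preserves your structure, which is otherwise a clean formalization of the paper's looser ``rate $\times$ time'' step (the paper simply multiplies the per-turn expected number of relays by $D_n(r)$; your pathwise bound $T\le\frac{r}{v\cos(\theta_c)}$ plus optional stopping makes that multiplication rigorous). Keep the thinned Poisson process of triggering turns with intensity $\tau\frac{\pi-\theta_c}{\pi}$, attach to each triggering turn an i.i.d.\ geometric cascade length with mean $\frac{\pi}{\theta_c}$ (independent of the turn process, again by isotropy), and apply Wald's identity on top of your optional-stopping bound. This yields
\begin{equation*}
F_n^t(r)\le \tau\,\frac{\pi-\theta_c}{\pi}\cdot\frac{\pi}{\theta_c}\cdot\frac{r}{v\cos(\theta_c)}=\frac{\pi-\theta_c}{\theta_c}\,\frac{\tau}{v\cos(\theta_c)}\,r~,
\end{equation*}
which is exactly the stated bound.
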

\begin{proof}

\begin{figure} [!t]
\centering
\psfrag{a}{$\theta<\theta_c$}
\psfrag{b}{$\theta'>\theta_c$}
\psfrag{r}{$r$}
\includegraphics[scale=0.6]{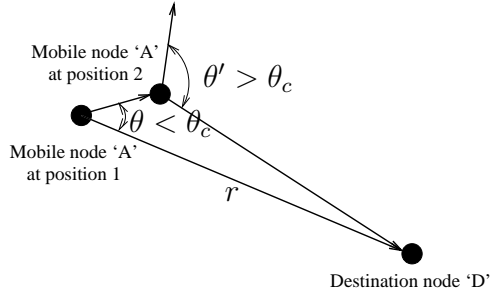}
\caption{Figurative description of relay change due to turn. At position $1$, $\theta<\theta_c$ and node $A$ carries the packet for node $D$. At position $2$, node $A$ changes its heading vector and must transmit the packet.
\label{fig:case1}}
\end{figure}

We consider the case in Fig. \ref{fig:case1} and assume that a mobile node is carrying a packet to its destination located at distance $r$. 
The node changes its direction with Poisson rate $\tau$. When the node changes its direction, it may keep a direction that stays within angle $\theta_c$ with the bearing vector and this will not trigger a relay change. This occurs with probability $\frac{\theta_c}{\pi}$. Otherwise, the packet must change relay. But the new relay may have relative bearing angle greater than $\theta_c$ which would result in an immediate new relay change. Therefore, at each direction change, there is an average of $\frac{\pi-\theta_c}{\theta_c}$ relays. Multiplied by $D_n(r)$ this gives our upper-bound of $F_n^t(r)$. 

Note that we have not considered the turn due to bounces on the borders of square map. But it is easy to see via straightforward geometric considerations that they cannot actually generate a relay change. 
\end{proof}

\subsubsection{Number of Relay Changes Due to Pass Over}
We prove the following theorem:
\begin{theorem}
We have the bound
\begin{equation*}
F_n^p(r)\le\frac{\pi\tan(\theta_c)}{\theta_c^2}\log\left(\frac{r}{r_n}\right)~.
\end{equation*}
\end{theorem}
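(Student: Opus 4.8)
The plan is to follow the distance-to-destination across successive pass-over relay changes and show that each \emph{successful} carry shrinks this distance by a random multiplicative factor, so that the number of pass-overs needed to reach within range $r_n$ of $D$ is logarithmic in $r/r_n$. The geometric engine is the invariance of the perpendicular offset. While a relay carries the packet it travels in a straight line, so the perpendicular distance $d$ from the destination $D$ to that line is constant; if the relay holds the packet at distance $\rho$ from $D$ with relative bearing $\theta$, then $d=\rho\sin\theta$, and a pass-over occurs exactly when $\theta$ reaches $\theta_c$, i.e. at distance $\rho=d/\sin\theta_c$. Hence a fresh carrier that picks up the packet at distance $r$ with relative bearing $\theta_0<\theta_c$ reaches its own pass-over at distance $r\,\sin\theta_0/\sin\theta_c<r$: each carried leg multiplies the distance by $\sin\theta_0/\sin\theta_c$. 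The $O(r_n)$ displacement of the handoff hop itself is negligible against $r\gg r_n$ and is dropped, just as distance decrements during relay changes were neglected in the delay bound.

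The probabilistic input comes from the isotropic property. A freshly selected relay moves in a direction that is uniform and independent of the bearing vector (the emission cone constrains only the relay's position, not its heading), so its relative bearing $\theta_0$ is uniform on $[0,\pi]$; it keeps the packet rather than forwarding it immediately with probability $\theta_c/\pi$, and conditioned on keeping it $\theta_0$ is uniform on $[0,\theta_c]$. Thus on average $\pi/\theta_c$ relay changes are spent per successful carry, and (with $F_n^p(r)=0$ for $r\le r_n$) we obtain the renewal recursion
\begin{equation*}
F_n^p(r)=\frac{\pi}{\theta_c}+\EE_{\theta_0}\!\left[F_n^p\!\left(r\,\frac{\sin\theta_0}{\sin\theta_c}\right)\right],\qquad r>r_n,
\end{equation*}
with $\theta_0$ uniform on $[0,\theta_c]$. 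Substituting the ansatz $F_n^p(r)=A\log(r/r_n)$ collapses the recursion to $A=\dfrac{\pi/\theta_c}{\EE[\log(\sin\theta_c/\sin\theta_0)]}$, so everything reduces to lower-bounding the expected logarithmic decrement.

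For that, write $\log(\sin\theta_c/\sin s)=\int_s^{\theta_c}\cot u\,\d u$ and swap the order of integration:
\begin{equation*}
\EE\!\left[\log\frac{\sin\theta_c}{\sin\theta_0}\right]=\frac{1}{\theta_c}\int_0^{\theta_c}\!\log\frac{\sin\theta_c}{\sin s}\,\d s=\frac{1}{\theta_c}\int_0^{\theta_c}\! s\cot s\,\d s .
\end{equation*}
The key analytic lemma is that $s\mapsto s\cot s$ is decreasing on $(0,\pi/2)$, because $\frac{\d}{\d s}(s\cot s)=\dfrac{\frac12\sin 2s-s}{\sin^2 s}<0$. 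Therefore $s\cot s\ge\theta_c\cot\theta_c$ on all of $[0,\theta_c]$, the integral is at least $\theta_c^2\cot\theta_c$, and $\EE[\log(\sin\theta_c/\sin\theta_0)]\ge\theta_c\cot\theta_c$. This yields $A\le\dfrac{\pi/\theta_c}{\theta_c\cot\theta_c}=\dfrac{\pi\tan\theta_c}{\theta_c^2}$, the claimed constant.

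The main obstacle is turning the renewal recursion into a rigorous upper bound rather than a heuristic: the clean multiplicative form ignores both the $O(r_n)$ handoff hop and the overshoot below $r_n$ on the final leg, and individual legs have no uniform positive contraction (the decrement $\log(\sin\theta_c/\sin\theta_0)$ can be arbitrarily small). I would close this by checking that $\tilde F(r)=\frac{\pi\tan\theta_c}{\theta_c^2}\log_+(r/r_n)$ is a supersolution of the recursion once $r/r_n$ is large, where the strict gap $\EE[\log(\sin\theta_c/\sin\theta_0)]>\theta_c\cot\theta_c$ dominates the vanishing overshoot contribution $\EE\big[\log_+\!\big(r_n\sin\theta_c/(r\sin\theta_0)\big)\big]$; the minimal nonnegative solution $F_n^p$ then lies below $\tilde F$. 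This is exactly the asymptotic regime $r_n\to 0$ relevant to the capacity theorem.
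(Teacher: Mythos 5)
Your geometric and probabilistic ingredients are exactly the paper's: the pass-over distance $\rho=r\sin\theta_0/\sin\theta_c$ from the invariance of the perpendicular offset, the uniform distribution of $\theta_0$ on $[0,\theta_c]$ for a fresh carrier, and the average of $\pi/\theta_c$ relays per pass-over. Your analytic lemma is also correct and recovers the right constant: $\EE[\log(\sin\theta_c/\sin\theta_0)]=\frac{1}{\theta_c}\int_0^{\theta_c}s\cot s\,\d s\ge\theta_c\cot\theta_c$ since $s\cot s$ decreases on $(0,\pi/2)$. The genuine gap is the closing comparison step, and as you set it up it fails. With the unconditional cost $\pi/\theta_c$ and $\log_+$, the supersolution inequality $\tilde F(r)\ge\pi/\theta_c+\EE\bigl[\tilde F(r\sin\theta_0/\sin\theta_c)\bigr]$ is violated for $r/r_n$ near $1$: the left side is $A\log(r/r_n)\approx 0$ while the right side is at least $\pi/\theta_c$. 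Restricting to ``$r/r_n$ large'' does not rescue the argument, because the recursion defining $F_n^p$ passes through every scale down to $r_n$; a comparison with the minimal nonnegative solution requires the supersolution inequality at \emph{all} $r>r_n$. Patching the boundary layer $r\le Mr_n$ separately yields only $F_n^p(r)\le A\log(r/r_n)+C$ with an additive constant --- enough for the paper's $O(\log n)$ conclusion, but weaker than the stated theorem. (If instead you pay the $\pi/\theta_c$ cost only on the event $\{\rho>r_n\}$, which is the correct recursion since a carry ending within range delivers directly, the inequality becomes marginal rather than violated near $r/r_n=1$; verifying it at all intermediate scales is then precisely the delicate computation you have not done.)

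The paper sidesteps this boundary problem entirely by not using a recursion or fixed-point ansatz. It bounds the \emph{conditional intensity} pointwise: given the current pass-over (or the start) at distance $x$, the density of the location $y<x$ of the next pass-over equals $\frac{\tan(\rho^{-1}(y,x))}{\theta_c}\frac{1}{y}\le\frac{\tan\theta_c}{\theta_c}\frac{1}{y}$, so the point process of pass-over locations on $[r_n,r]$ is dominated by a Poisson process of intensity $\frac{\tan\theta_c}{\theta_c y}$, and the expected number of pass-overs is at most $\int_{r_n}^{r}\frac{\tan\theta_c}{\theta_c}\frac{\d x}{x}=\frac{\tan\theta_c}{\theta_c}\log(r/r_n)$ at every scale, with no boundary analysis; multiplying by $\pi/\theta_c$ finishes. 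In effect, the paper applies the ``worst case $\theta_0=\theta_c$'' bound inside the density before integrating, where you apply it to the expected logarithmic decrement per step and then invert. If you replace your renewal recursion by this intensity-domination step, the supersolution machinery (and its unresolved boundary issue) becomes unnecessary and the clean bound follows directly.
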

\begin{proof}
Here we consider the case of Fig. \ref{fig:case2}. We assume that a mobile node at distance $r$, from its destination, has a relative bearing angle equal to $\theta$. If it keeps its trajectory ({\it i.e.,} does not turn), it will need to transmit to a new relay when it passes over the destination, {\it i.e.}, when it arrives at a distance of \mbox{$\rho(\theta,r)=\frac{\sin(\theta)}{\sin(\theta_c)}r$} from the destination. The function of $\theta$ $\rho(\theta,r)$ is bijective from $[0,\theta_c]$ to $[0,r]$. For $x\in[0,r]$ let  $\rho^{-1}(x,r)$ be its inverse.

\begin{figure} [!t]
\centering
\psfrag{a}{$\theta<\theta_c$}
\psfrag{b}{$\theta'=\theta_c$}
\psfrag{r}{$r$}
\psfrag{d}{$\rho(\theta,r)$}
\includegraphics[scale=0.6]{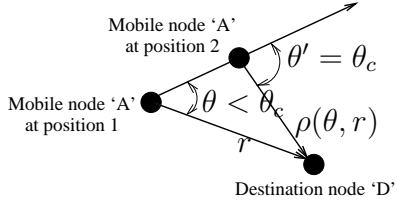}
\caption{Figurative description of relay change due to pass over. At position $1$, $\theta<\theta_c$ and node $A$ carries the packet for node $D$. At position $2$, node $A$ has the same heading vector but $\theta'=\theta_c$ and it must transmit the packet.
\label{fig:case2}}
\end{figure}

Assume that $r$ is the distance to the destination when the relay receives the packet or just after a turn. Thus the angle $\theta$ is uniformly distributed on $[0,\theta_c]$, {\it i.e.}, with a constant probability density $\frac{1}{\theta_c}$.
The probability density of the pass over event at $x<r$ (assuming no direction change) is therefore
$$
\frac{1}{\theta_c}\frac{\partial}{\partial x}\rho^{-1}(x,r)=\frac{\sin(\theta_c)}{\theta_c\cos(\rho^{-1}(x,r))r}=\frac{\tan(\rho^{-1}(x,r))}{\theta_c}\frac{1}{x}~.
$$


Since $\rho^{-1}(x,r)\le\theta_c$, the point process where the packet would need a relay change due to pass over is upper bounded by a Poisson point process on the interval $[r_n,r]$ and of intensity equal to $\frac{\tan(\theta_c)}{\theta_c}\frac{1}{x}$ for $x\in[r_n,r]$. 

Since a relay change due to pass over corresponds to an average of $\frac{\pi}{\theta_c}$ relays, and neglecting the decrement of distance during each transmission phase, we get
\begin{equation*}
F_n^p(r)=\int_{r_n}^r\frac{\pi\tan(\theta_c)}{\theta^2_c}\frac{dx}{x}=\frac{\pi\tan(\theta_c)}{\theta^2_c}\log \left(\frac{r}{r_n}\right)~.
\end{equation*}
\end{proof}
We have thus
\begin{equation*}
F_n(r)\le\frac{\pi-\theta_c}{\theta_c}\frac{\tau}{\cos(\theta_c)v}r+\frac{\pi\tan(\theta_c)}{\theta^2_c}\log \left(\frac{r}{r_n}\right)~.
\end{equation*}

Therefore we have a main contribution of $O(\log n)$ relay changes that comes from $\log(1/r_n)$. The result holds because we assume that there is always a receiver in each relay change. In the next sub-section we remove this condition to establish a result with high probability. 

\subsection{Number of Relay Changes With High Probability of Success}
In the previous subsection we assumed that there is always a receiving relay in the emission cone at each relay change and we said that the relay change is always successful. The case with failed relay change would introduce additional complications. For example one could use the fixed relays if the packet cannot be delivered to a mobile relay. Anyhow, to simplify the present contribution, we will show that with high probability, {\it i.e.}, with probability approaching one when $n$ approaches infinity, every relay change succeeds.
\begin{theorem}
With high probability on arbitrary packets, all relay changes succeed for this packet and are in average number $F_n(r)$ and the delay is $D_n(r)$.
\end{theorem}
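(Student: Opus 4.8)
The statement asserts that for an arbitrary (typical) packet, with probability tending to $1$ as $n\to\infty$, every relay change of that packet finds a relay, so that the deterministic bounds $F_n(r)$ and $D_n(r)$ derived above hold without the standing ``a relay is always available'' assumption. A relay change can only fail when the emission cone is empty of mobile nodes. The plan is therefore to bound the probability that a single emission cone is empty, to bound the number of relay changes a packet undergoes, and to combine the two by a union bound.

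First I would estimate the probability $q_n$ that one relay change fails. At a relay change, node $A$ seeks a mobile node within distance $r_n$ whose direction differs from the bearing vector by at most $\theta_e$; this target is a circular sector of radius $r_n$ and half-angle $\theta_e$, of area $\theta_e r_n^2$. By the isotropic property the remaining $n-1$ nodes are, at any fixed time, independent and uniform on the unit square, so the number of candidate relays in the cone is $\mathrm{Binomial}(n-1,\theta_e r_n^2)$ with mean $\mu_n=(n-1)\theta_e r_n^2=\frac{\beta\theta_e}{\pi}\log\log n\,(1+o(1))$. Hence $q_n\le(1-\theta_e r_n^2)^{n-1}\le e^{-\mu_n}=(\log n)^{-\frac{\beta\theta_e}{\pi}(1+o(1))}$, a negative power of $\log n$.

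Next I would control the number of relay changes. Its mean $F_n(r)=O(\log n)$ splits into a turn part $F_n^t(r)=O(1)$ — a Poisson-mean-$O(1)$ number of turns over the delivery time $D_n(r)=O(1/v)$, each triggering on average $\frac{\pi-\theta_c}{\theta_c}$ relays — and a pass-over part $F_n^p(r)=O(\log n)$ governed by a Poisson process of total mass $O(\log(r/r_n))=O(\log n)$. As both counts are Poisson numbers of relay-change events with bounded-mean branching, a Poisson tail bound gives an event of probability $1-o(1)$ on which the actual number $M$ of relay changes obeys $M\le K\log n$ for a suitable constant $K$. A union bound over these relay changes — which needs no independence between cones — then yields that the probability that some relay change of the packet fails is at most $\PP[M>K\log n]+K\log n\cdot q_n=o(1)+O\big((\log n)^{1-\frac{\beta\theta_e}{\pi}+o(1)}\big)$. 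This tends to $0$ whenever $\beta>\pi/\theta_e$, and since the capacity bound of order $\frac{n}{\log n\log\log n}$ holds for every fixed $\beta>0$ we are free to enlarge $\beta$ to meet this condition; on the resulting event all relay changes succeed and $F_n(r)$, $D_n(r)$ stand.

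The delicate point I anticipate is justifying $q_n$ as a bona fide marginal bound at each relay change. The times and positions of the relay changes are themselves determined by the evolving node configuration, so ``the cone is empty'' is not literally an independent draw of fresh uniform points. This should be handled by conditioning on the packet's history up to each relay change and applying the uniformity of the nodes not yet queried, so that the conditional empty-cone probability is still at most $q_n$; only with this in hand does the union bound apply rigorously.
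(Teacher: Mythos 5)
Your proposal is correct in substance and hinges on exactly the same quantitative facts as the paper --- the single-failure estimate $(1-\theta_e r_n^2)^{n-1}=(\log n)^{-\beta\theta_e/\pi(1+o(1))}$, the bound $F_n(r)=O(\log n)$, and the condition $\beta\theta_e/\pi>1$ --- but it routes the conclusion differently. The paper introduces a \emph{decoy} device: whenever an emission cone is empty, a fictitious relay is created to receive the packet, so the packet trajectory and hence $F_n(r)$ are well defined unconditionally; it then bounds the \emph{expected} number of failed relay changes by $E_n(r)\le F_n(r)(\log n)^{-\beta\theta_e/\pi}\to 0$ and concludes via Markov's inequality ($P_n(r)\le E_n(r)$). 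You instead truncate and union-bound: you claim the number $M$ of relay changes satisfies $M\le K\log n$ with probability $1-o(1)$ and then pay $K\log n\cdot q_n$ for the failures. This buys nothing extra here and costs more: the first-moment route needs only the expectation bound $F_n(r)=O(\log n)$ already established, whereas your route additionally requires a tail bound on $M$, which the preceding section does not supply (it bounds only means); your appeal to Poisson concentration is justifiable, since pass-over events are dominated by a Poisson process of total mass $O(\log n)$ with geometrically distributed branching, but it is genuinely extra work. Note also that to even define $M$ when relay changes can fail, you implicitly need the paper's decoy coupling (the real system coincides with the idealized one up to the first failure); your sketch leaves this unstated, while the paper makes it the centerpiece. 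On the other hand, you explicitly flag the conditioning subtlety --- that successive emission cones are not fresh independent draws --- which the paper silently glosses over in its step $E_n(r)\le F_n(r)q_n$; making that conditional bound explicit would be a genuine improvement to either argument.
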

\begin{proof}
We use a modified stochastic system to cope with failed relay changes. The modification is the following: when there is no relay in the emission cone during a relay change a {\em decoy} mobile relay is created in the emission cone that will receive the packet. Each decoy relay is used only for one packet and disappear after use. Notice that the modified system is {\it not} a practical scheme in a practical network. The analysis in the previous section still holds and in particular $F_n(r)$ is now the average unconditional number of relay changes (including those via decoy relays) for any packet starting at distance $r$ from destination. 

Let $P_n(r)$ be the probability that a packet starting at distance $r$ has a failed relay change. The probability that a relay change fails is equal to $(1-\theta_e r_n^2)^{n-1}\sim e^{-n\theta_e^2r_n^2}=(\log n)^{-\beta\frac{\theta_e}{\pi}}$. Therefore the average number of failed relay changes $E_n(r)\le F_n(r)(\log n)^{-\beta\frac{\theta_e}{\pi}}$ which tends to zero when $\beta\frac{\theta_e}{\pi}>1$, since $F_n(r)=O(\log n)$. The final result comes since $P_n(r)\le E_n(r)$.
\end{proof}

\section{Simulations}
\label{sec:simulations}
We performed simulations with CRB georouting scheme under two contexts:
\begin{compactenum}
\item a simplified context where the network is modeled under unit disk model;
\item a realistic context where the network operates under slotted ALOHA and a realistic SINR interference model is considered. The simulations of CRB scheme are stressed to the point that the motion timings are not so large compared to slot times. 
\end{compactenum}
\subsection{Under Disk Graph Model}
\label{sec:simdiscunit}

In this section, we consider a network of $n$ mobile nodes. We assume that all nodes have the same radio range given by
\[
r_n=\sqrt{\frac{\beta_0\log\log n}{\pi n}}~.
\]
Each mobile node moves according to an i.i.d. random walk mobility model, {\it i.e.}, it starts from a uniformly distributed initial position, moves in straight line with constant speed and uniformly selected direction and reflects on the borders of the square area (like billiard balls). 

In the next section (Section~\ref{sec:protocolsinr}), we will further explore the effect of interference on the simulations, but for the moment we only consider a source mobile node and its randomly located destination node which is fixed. We adopt the disk graph model of interference, {\it i.e.}, two nodes are connected or they can exchange information if the distance between them is smaller than a certain threshold (called radio range), otherwise, they are disconnected. A mobile node relays the packet only if the relative bearing angle, {\it i.e.}, the absolute angle made by the heading vector and the bearing vector, becomes greater than~$\theta_c$. Otherwise, it continues to carry the packet.

\subsubsection{Simulation parameters and assumptions}

The purpose of our simulations is to verify the scaling behavior of average delay and number of hops per packet with increasing number of nodes in the network. Therefore, the number of mobile nodes, $n$, in the network is varied from $10000$ to two million nodes. The values of other parameters, which remain constant, and do not impact the scaling behavior are listed as follows.
\begin{compactenum}[a]
\item[{\it (i)}] Parameters of CRB scheme, $\theta_c$ and $\theta_e$, are taken to be $\pi/6$.
\item[{\it (ii)}] The speed of all mobile nodes is constant, {\it i.e.}, $0.005$ unit distance per slot.
\item[{\it (iii)}] All mobile nodes change their direction according to a Poisson point process with mean equal to $10$ slots.
\item[{\it (iv)}] The value of constant factor $\beta_0$ is assumed to be equal to~$40$.
\end{compactenum}

\subsubsection{Results}

We have evaluated the following parameters.
\begin{compactenum}[a]
\item[{\it (i)}] Average delay per packet.
\item[{\it (ii)}] Average number of hops per packet.
\end{compactenum}

We considered the Monte Carlo Method with $100$ simulations. The delay of a packet is computed from the time when its processing started at its source mobile node until it reaches its destination node. Figure \ref{fig:delayvsnumberofnodes} shows the average delay per packet with an increasing number of nodes. We notice that as $n$ increases, the average delay per packet appears to approach a constant upper bound which can be computed from \eqref{eq:delay}. Figure \ref{fig:transmissionsvsnumberofnodes} shows the average number of hops per packet with increasing values of $n$.

\begin{figure}[t!]
\centering
\includegraphics[scale=0.77]{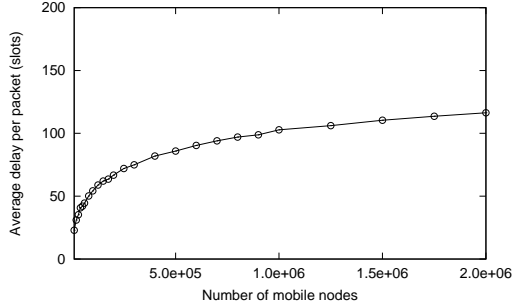}
\caption{Average delay per packet.}
\label{fig:delayvsnumberofnodes}
\end{figure}

\begin{figure}[t!]
\centering
\includegraphics[scale=0.77]{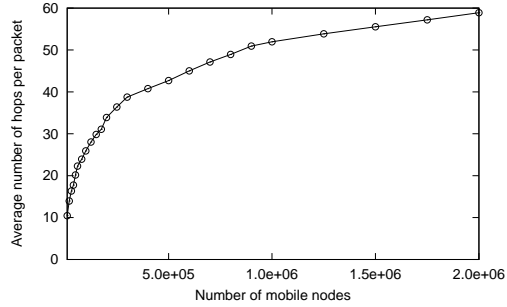}
\caption{Average number of hops per packet.}
\label{fig:transmissionsvsnumberofnodes}
\end{figure}

\subsection{With slotted ALOHA under SINR interference model}
\label{sec:protocolsinr}

In this section, we will present the simulations of CRB georouting scheme with a transmission model which does not rely on the estimate of $r_{n}$ and is based on the required minimal SINR threshold. 

\subsubsection{Transmission model}
Our transmission model is as follows. Let $P_i$ be the transmit power of node $i$ and $\gamma_{ij}$ be the channel gain from node $i$ to node $j$ such that the received power at node $j$ is $P_i\gamma_{ij}$. The transmission from node $i$ to node $j$ is successful only if the following condition is satisfied
$$
\frac{P_i\gamma_{ij}}{N_0+\sum_{k\neq i}P_k\gamma_{kj}}>K~,
$$
where $K$ is the desired minimum SINR threshold for successfully receiving the packet at the destination and $N_0$ is the background noise power. For now, we ignore multi-path fading or shadowing effects and assume that the channel gain from node $i$ to node $j$ is given by 
$$
\gamma_{ij}=\frac{1}{|z_i-z_j|^{\alpha}}~,
$$
where $\alpha>2$ is the attenuation coefficient and $z_i$ is the location of node $i$.

\subsubsection{Simulations under SINR interference model}
\label{sec:simunderSINR}

For the theoretical analysis in Section \ref{sec:analysis}, we have assumed that the effective range of successful transmission is 
$$
r_n = \sqrt{\beta\frac{\log \log n}{\pi n}},
$$
which requires that the mobile nodes have an average transmission rate of $p_n=\beta/\log \log n$. In other words, if the mobile nodes emit packets at the given average rate, the average distance of successful transmission under SINR interference model is of $O(r_n)$ and the results from theoretical analysis are applicable as well. 

We assume that time is slotted and mobile nodes determine their relative bearing angles at the beginning of a slot. We also assume that all nodes are synchronized and simultaneous transmitters in each slot emit a {\em Call-to-Receive} packet at the beginning of the slot. Moreover, we also assume that fixed nodes do not emit any packet except, maybe, an {\em Accept-to-Receive} packet in response to a transmission by a mobile node. 

In our simulation environment, $n$ mobile nodes start from a uniformly distributed initial position and move independently in straight lines and in randomly selected directions. They also change their direction randomly at a rate which is a Poisson point process. Each mobile node sends packets towards a unique destination (fixed) node, and all destinations nodes are also uniformly distributed in the network area.  

In order to keep load in the network finite, the packet generation rate at a node, $\rho_n$, should be of $O(p_n/X_{n})$ where $X_n$ is the expected number of transmissions per packet. From the theoretical analysis, we know that
$$
X_n=O\left(\log\left(\frac{n}{\beta_2}\right)\right) + c~,
$$ 
where $c$ is a constant if $\theta_c$ is non-varying. In our simulations under SINR interference model, we assume that the knowledge of $r_n$ is not available and mobile nodes use minimal SINR threshold for successfully receiving a packet. We also assume that each mobile node generate packets, destined for its unique fixed destination node, at a uniform rate given by
\begin{equation}
\rho_n=\frac{1}{\beta_1\log(\frac{n}{\beta_2})\log \log n}~,
\label{eq:rho}
\end{equation}
for some \mbox{$\beta_1>0$} and \mbox{$\beta_2>0$}. 
We ignored the value of constant $c$ and have observed that the simulation results are asymptotically correct because, with $n$ increasing, value of $c$ should be insignificant as compared to the $O(\log(n/\beta_2))$ factor.

\subsubsection{Simulation parameters and assumptions}

The purpose of our simulations is to verify the scaling properties of network capacity, delay and number of transmissions per packet with increasing number of nodes in the network. The number of mobile nodes, $n$, in the network is varied from $250$ nodes to 100,000 nodes. All nodes use uniform unit nominal transmit power and the background noise power $N_0$ is assumed to be negligible. The values of other parameters are listed as follows.

\begin{compactenum}[a]
\item[{\it (i)}] Parameters of CRB scheme, $\theta_c$ and $\theta_e$, are taken to be $\pi/6$.
\item[{\it (ii)}] The speed of all mobile nodes is constant, {\it i.e.}, $0.01$ unit distance per slot.
\item[{\it (iii)}] All mobile nodes change their direction independently and randomly according to a Poisson point process with mean equal to $10$ slots.
\item[{\it (iv)}] The values of constant factors $\beta_1$ and $\beta_2$ are assumed equal to $500$ and $1$ respectively.
\item[{\it (v)}] SINR threshold, $K$, is assumed equal to $1$.
\item[{\it (vi)}] Attenuation coefficient, $\alpha$, is assumed equal to $2.5$.
\end{compactenum}

In our simulations, we make the following assumptions. 

\begin{compactenum}[a]
\item[{\it (i)}] Each mobile node generates an infinite number of packets, at rate $\rho_n$, for its respective destination node. 
\item[{\it (ii)}] A mobile node may carry, in its buffer, its own packets as well as the packets relayed from other mobile nodes. Therefore, it may have more than one packet in its buffer which it must transmit because their respective relative bearing angles are greater than $\theta_c$. In such a case, it first transmits the packet which is furthest from its destination.
\end{compactenum}

\subsubsection{Results}

We have examined the following parameters.

\begin{compactenum}[a.]
\item[{\it (i)}] Throughput capacity per node, $\lambda_{n}$. 
\item[{\it (ii)}] Average number of hops, $h_{n}$, and transmission attempts, $t_{n}$, per packet.
\item[{\it (iii)}] Average delay per packet.
\end{compactenum}

The throughput capacity per node, $\lambda_n$, is the average number of packets arriving at their destinations per slot per mobile node. With $n$ increasing, throughput capacity per node should follow the following relation
\begin{equation}
\lambda_n=\frac{\eta}{{\beta_1}{\log(\frac{n}{\beta_2})\log \log n}}~,
\label{eq:lambda}
\end{equation}
for some \mbox{$0<\eta<1$} which depends on $K, \alpha$ and protocol parameters. Note that the values of these constants do not affect the asymptotic behavior of $\lambda_{n}$ which is also observed in our simulation results. 

In order to verify the asymptotic character of simulated packet generation rate and throughput capacity, we have analyzed the parameters $m_{\rho}$ and $m_{\lambda}$ which are given by 
\begin{eqnarray*}
m_{\rho}&=&\rho_n\left(\beta_1\log\left(\frac{n}{\beta_2}\right)\log \log(n)\right)~,\\
m_{\lambda}&=&\lambda_n\left(\beta_1\log\left(\frac{n}{\beta_2}\right)\log \log(n)\right).
\end{eqnarray*}
From the definition of $\rho_n$ in \eqref{eq:rho}, the value of $m_{\rho}$ should be constant at $1$ whereas, with $n$ increasing, value of $m_{\lambda}$ should converge to the constant $\eta$. From Fig. \ref{fig:net_capacity_verify}, value of $\eta$ is found to be approximately equal to $0.45$. Figure \ref{fig:net_capacity} shows the simulated and theoretical packet generation rate, $n\rho_{n}$, and throughput capacity, $n\lambda_{n}$, in the network.  The theoretical values of $n\rho_n$ and $n\lambda_n$ are computed from \eqref{eq:rho} and \eqref{eq:lambda}.  

\begin{figure} [!t]
\centering
\includegraphics[scale=0.77]{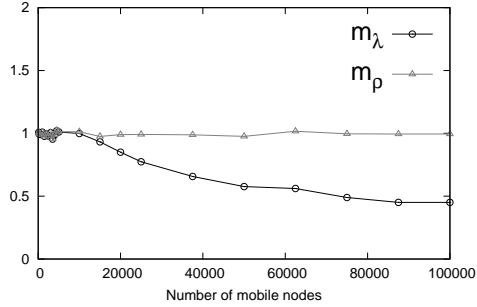}
\caption{Verification of network throughput capacity with plots of $m_{\lambda}$ and $m_{\rho}$.
\label{fig:net_capacity_verify}}
\end{figure}

\begin{figure} [!t]
\centering
\includegraphics[scale=0.77]{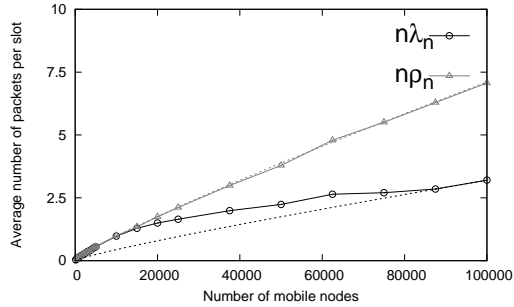}
\caption{Simulated (solid lines) and theoretical (dotted lines) network throughput capacity, $n\lambda_n$, and network packet generation rate, $n\rho_n$.\label{fig:net_capacity}}
\end{figure}

\begin{figure} [!t]
\centering
\includegraphics[scale=0.77]{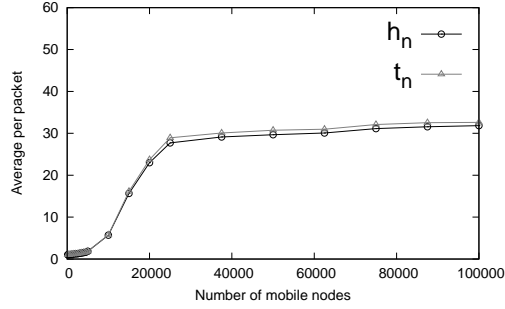}
\caption{Average number of hops, $h_n$, and transmission attempts, $t_n$, per packet.\label{fig:hops_tx}}
\end{figure}

\begin{figure} [!t]
\centering
\includegraphics[scale=0.77]{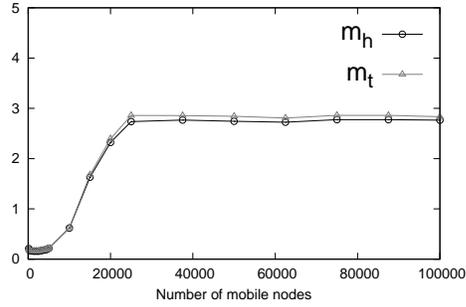}
\caption{Verification of number of hops and transmission attempts with plots of $m_h$ and $m_t$.
\label{fig:hops_tx_verify}}
\end{figure}

\begin{figure} [!t]
\centering
\includegraphics[scale=0.77]{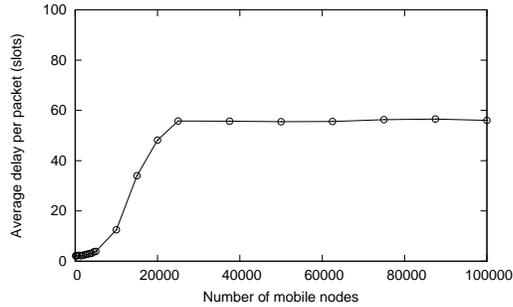}
\caption{Average delay per packet.\label{fig:delay}}
\end{figure}

Figure \ref{fig:hops_tx} shows the average number of hops, $h_n$, and transmission attempts, $t_n$, per packet. The value of $t_n$ is slightly higher than the value of $h_n$ because of the possibility that a successful receiver may not be found in each transmission phase, {\it i.e.}, in the cone of transmission formed with $\theta_e$. With $n$ increasing, $h_n$ and $t_{n}$ are expected to grow in $O(\log (n/\beta_2))$. To verify this character in simulation results, we examine the parameters $m_{h}$ and $m_{t}$ given by
\begin{eqnarray*}
m_{h}&=&h_n\frac{1}{\log(\frac{n}{\beta_2})}~,\\
m_{t}&=&t_n\frac{1}{\log(\frac{n}{\beta_2})}.
\end{eqnarray*}
If the values of $h_{n}$ and $t_{n}$ are in $O(\log (n/\beta_2))$, the values of $m_{h}$ and $m_{t}$ should approach a constant value which is the case in Fig.~\ref{fig:hops_tx_verify}.

The delay of a packet is computed from the time when its processing started at its source mobile node until the time it arrives at its destination node. Figure~\ref{fig:delay} shows the average delay per packet. As the number of mobile nodes increase, the average delay appears to approach a constant value.

It can be observed that when $n$ is small, the average number of hops per packet is almost of $O(1)$ which also means that the average delay per packet is of $O(1)$ and the network throughput capacity is of $O(\eta n)$: although, in simulation results, it is bounded by the network packet generation rate which is of $O(\frac{n}{\log n \log\log n})$. This can be observed in Fig. \ref{fig:net_capacity}, \ref{fig:hops_tx} and \ref{fig:delay}. The reason is that when $n$ is small, the number of simultaneous transmissions in the network is also small and packets can be delivered by the mobile nodes, directly to their destination nodes, in $O(1)$ hops. As $n$ increases, number of simultaneous transmitters increase and consequently the effective transmission range of each transmitter shrinks. Therefore, the dominant factor in the number of transmissions per packet comes from the fact that a mobile relay has to be close to the destination, to deliver a packet. According to theoretical analysis, $h_n$ and $t_n$ grow in $O(\log n)$ which is also observed in the simulation results. Simulations also show that, asymptotically, network throughput capacity is of $O(\frac{n}{\log n \log\log n})$ and average delay per packet is of $O(1/v)$ which complies with our theoretical analysis.

\section{Extensions and general mobility models}
\label{sec:extension}

In our discussion, we primarily focussed on the capacity-delay tradeoff and thus for the initial sake of clarity assumed that the fixed nodes can only receive packets destined for them.  We could also consider a slight variation in the specification of the model of CRB scheme such that the fixed nodes also participate in the routing of packets to their destination nodes. For example, during a transmission phase, if a packet cannot be transmitted to its destination node or relayed to a random mobile neighbor in the cone of transmission, it can be relayed to a fixed node. This fixed node must emit this packet immediately to its destination node or to any mobile relay in the neighborhood. Note that this will also help increase the connectivity of the network. 

The condition about i.i.d. random walks can be relaxed and the result about the expected number of relay changes will still be valid. In other words, the i.i.d. random walk model can be seen as a worst case compared to realistic mobility models. If the mobile relays move like cars in an urban area, then  we can expect that their mobility model will significantly depart from the random walk. Indeed cars move toward physical destinations and in their journey on the streets toward their destination, their heading after each turn is positively correlated with the heading before the turn. This implies that the probability that a relay change is needed after a turn is smaller than it would be under a random walk model, where headings before  and after turn are not correlated. Furthermore on a street, the headings are positively correlated (consider Manhattan one-way streets) and in this case a relay change due to a pass over will have more chances to arrive on a relay with good heading (one half instead of $\theta_c/\pi$). Again this would lead to less relay changes due to pass over.

The result still holds if we assume that the turn rate $\tau$ depends on $n$ and \mbox{$\tau=\tau_n=O(\log n)$}. In this case, the mobility model would fit even better for the realistic mobility of an urban area. Indeed the trajectories of cars should be fractal or self-similar, showing more frequent turns when cars are close to their physical destination (different than packet destination) or when leaving their parking lot. In this case, the overall turn rate tends to be in $O(\log n)$ with a coefficient depending on the Hurst parameter of the trajectory. This would lead to the same estimate of $O(\log n)$ relay change per packet. 

Figure~\ref{fig:fractal} illustrates a self-similar trajectory in an urban area. It shows a two-dimensional trajectory (upper half) and its traveled distance (lower half). The successive turns are indicated by $T_1,\ldots,T_7$. The trajectory after any turn $T_i$ looks like a reduced copy of the original trajectory. 
The CRB scheme may need some adaptation to cope with some unusual street configurations, 
{\it e.g.}, to replace the cartesian distance with the Manhattan distance in the street map.

\begin{figure} [!t]
\centering
\psfrag{a}{$T_1$}
\psfrag{b}{$T_2$}
\psfrag{c}{$T_3$}
\psfrag{d}{$T_4$}
\psfrag{e}{$T_5$}
\psfrag{f}{$T_6$}
\psfrag{g}{$T_7$}
\psfrag{h}{$T_4,...,T_7$}
\psfrag{i}{distance}
\includegraphics[scale=0.5]{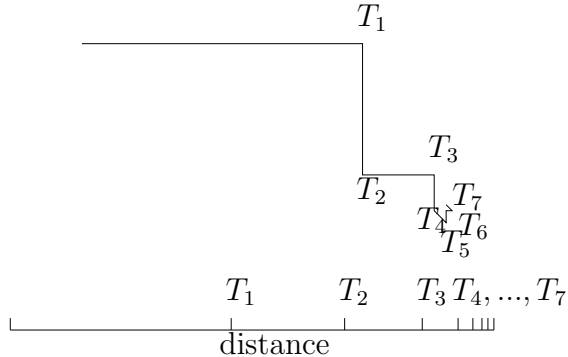}
\caption{Illustration of self-similar trajectories in urban areas.}
\label{fig:fractal}
\end{figure}

\section{Conclusions}
\label{sec:conclusions}

We have examined asymptotic capacity and delay in mobile networks with a georouting scheme, called CRB, for communication between source and destination nodes. Our results show that CRB allows to achieve the network capacity of $O(\frac{n}{\log n\log\log n})$ with packet delivery delay of $O(1)$ and transmissions per packet of $O(\log n)$. It is noticeable that this scheme does not need any sophisticated overhead for implementation. However, in this case, the mobile nodes must be aware of their position via a GPS system, for example. 

We have shown the asymptotic performance via analytical analysis under a unit disk graph model with random i.i.d. walks. The analytical results have been confirmed by simulations and in particular under ALOHA with SINR interference model. We have seen that the performance of CRB can be maintained even with non i.i.d. random walks, the latter being a worst case scenario. However, this latter result would require that the mobile nodes stay within same heading for $O(1/\log n)$ time. A next step would be to analyze the performance of this scheme on real traffic traces in urban areas. 

\bibliographystyle{hieeetr}
\bibliography{bibliography}

\end{document}